\newtheorem{theorem}{Theorem}
\newtheorem{definition}{Definition}
\newtheorem{remark}{Remark}
\newtheorem{lemma}{Lemma}
\begin{document}

\begin{frontmatter}

\title{On robust width property for Lasso and Dantzig selector models}

\author[HZ]{Hui Zhang\corref{cor 1}}
\ead{h.zhang1984@163.com}

\cortext[cor 1]{Corresponding author}
\address[HZ]{
College of Science, National University of Defense Technology, Changsha, Hunan, China, 410073}


\begin{abstract}
Recently, Cahill and Mixon completely characterized the sensing operators in many compressed sensing instances with a robust width property. The proposed property allows uniformly stable and robust reconstruction of certain solutions from an underdetermined linear system via convex optimization. However, their theory does not cover the Lasso and Dantzig selector models, both of which are popular alternatives in the statistics community. In this letter, we show that the robust width property can be perfectly applied to these two models as well. Our results solve an open problem left by Cahill and Mixon.
\end{abstract}

\begin{keyword}robust width property; compressed sensing; Lasso model; Dantzig selector model
\end{keyword}
\end{frontmatter}

\section{Introduction}
One of the main assignments of compressed sensing is to understand when it is possible to recover structured solutions to underdetermined systems of linear equations \cite{candes2014math}. During the past decade, there have developed many reconstruction guarantees; well-known concepts include restricted isometry property, null space property, coherence property, dual certificate, and more (the interested readers could refer to \cite{foucart2014math,zhang2012necessary,zhang2014one}). However, none of them is proved necessary for uniformly stable and robust reconstruction. Recently, Cahill and Mixon in \cite{jameson2014robust} introduced a new notion--robust width property, which completely characterizes the sensing operators in many compressed sensing instances. They restricted their attention into the following constrained optimization problem:
\begin{equation}\label{BP}
\min \|x\|_\sharp,~~\textrm{subject to} ~~ \|\Phi x-y\|_2\leq \epsilon  \tag{$Q_{\epsilon}$}
\end{equation}
such that their theory can not cover the Lasso and Dantzig selector models, both of which are popular alternatives in the statistics community. Here, $\|\cdot\|_\sharp$ is some norm used to promote certain structured solutions, operator $\Phi$ and data $y$ are given, and $\epsilon$ measures the error. In this letter, we extend their results to two other probably more popular optimization problems of the Lasso/Basis Pursuit and Dantzig selector types. Our derived results completely solve an open problem left by Cahill and Mixon and hence prove that the notion of robust width is indeed a ubiquitous property. In the following, we recall some notations appeared in the paper \cite{jameson2014robust}.

Let $x^\natural$ be some unknown member of a finite-dimensional Hilbert space $\mathcal{H}$, and let $\Phi :\mathcal{H}\rightarrow \mathbb{F}^M$ denote some known linear operator, where $\mathbb{F}$ is either $\mathbb{R}$ or $\mathbb{C}$. Subset $\mathcal{A}\subseteq \mathcal{H}$ is a particular subset that consists of some type of structured members.  $B_\sharp$ is the unit $\sharp$-ball.

\section{Robust width}
The robust width property was formally proposed in \cite{jameson2014robust}. We write down the definition and its equivalent form as follows.
\begin{definition}(\cite{jameson2014robust})
We say a linear operator $\Phi :\mathcal{H}\rightarrow \mathbb{F}^M$ satisfies the $(\rho,\alpha)$-robust width property over $B_\sharp$ if
$$ \|x\|_2\leq \rho \|x\|_\sharp$$
for every $x\in \mathcal{H}$ such that $\|\Phi x\|_2<\alpha \|x\|_2$; or equivalently if
$$\|\Phi x\|_2\geq \alpha \|x\|_2 $$
for every $x\in \mathcal{H}$ such that $\|x\|_2>\rho \|x\|_\sharp$.
\end{definition}
Here, we would like to point out the definition above is not completely new. In fact, when restricted to the case of $\ell_1$-minimization, it reduces to the $\ell_1$-constrained minimal singular value property which was originally defined in \cite{tang2011performance}.

\begin{definition}
For any $k \in \{1,2,\cdots, N\}$ and matrix $\Phi\in \mathbb{R}^{M\times N}$, define
the $\ell_1$-constrained minimal singular value of $\Phi$ by
\begin{equation*}r_k(\Phi)=\min_{x\neq 0, x\in S_k}\frac{\|\Phi x\|_2}{\|x\|_2}\end{equation*}
where $S_k=\{x\in \mathbb{R}^N: \|x\|_1\leq \sqrt{k}\|x\|_2\}$. If $r_k(\Phi)>0$, then we say $\Phi$ satisfies the $\ell_1$-constrained minimal singular value property with $r_k(\Phi)$.
\end{definition}

Work \cite{zhang2012constrained} exploited the geometrical aspect of the $\ell_1$-constrained minimal singular value property.

\section{Main results}
 We first introduce the definition of compressed sensing space.
\begin{definition}(\cite{jameson2014robust})\label{cs}
A compressed sensing space $(\mathcal{H},\mathcal{A}, \|\cdot\|_\sharp )$  with bound $L$ consists of a finite-dimensional Hilbert space $\mathcal{H}$, a subset $\mathcal{A}\subseteq \mathcal{H}$, and a norm $\|\cdot\|_\sharp$ on $\mathcal{H}$ with following properties:

 (i) $0\in \mathcal{A}$.

 (ii) For every $a\in \mathcal{A}$ and $v\in \mathcal{H}$, there exists a decomposition $v= z_1+z_2$ such that
 $$\|a+z_1\|_\sharp =\|a\|_\sharp+\|z_1\|_\sharp, ~~~~\|z_2\|_\sharp\leq L\|v\|_2.$$

\end{definition}

The subdifferential $\partial f(x)$ of a convex function $f$ at $x$ is the set-valued operator \cite{bauschke2011convex} given by
$$\partial f(x)=\{u\in\mathcal{H}: f(y)\geq f(x)+\langle u, y-x\rangle, \forall  y\in \mathcal{H}\}.$$
The following lemma will be useful to establish our main results.
\begin{lemma}\label{lem1}
Let $\|\cdot\|_\diamond$ be the dual norm of $\|\cdot\|_\sharp$ on $\mathcal{H}$. If $u\in \partial \|x\|_\sharp$, then $\|u\|_\diamond\leq 1$. If $x\neq 0$ and $u\in \partial \|x\|_\sharp$, then $\|u\|_\diamond= 1$.
\end{lemma}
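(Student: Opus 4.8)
The plan is to work directly from the subgradient inequality $\|y\|_\sharp \geq \|x\|_\sharp + \langle u, y-x\rangle$, which holds for every $y\in\mathcal{H}$ by the definition of $\partial\|\cdot\|_\sharp$, together with the variational characterization $\|u\|_\diamond = \sup_{\|w\|_\sharp\leq 1}\langle u,w\rangle$ of the dual norm. The whole proof is then a matter of feeding judiciously chosen test vectors $y$ into the subgradient inequality, and the two claims are handled separately.

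For the first claim I would insert $y = x + tw$ for an arbitrary $w\in\mathcal{H}$ and $t>0$, obtaining $\|x+tw\|_\sharp \geq \|x\|_\sharp + t\langle u,w\rangle$. The triangle inequality supplies the matching upper bound $\|x+tw\|_\sharp \leq \|x\|_\sharp + t\|w\|_\sharp$, so after cancelling $\|x\|_\sharp$ and dividing by $t$ I get $\langle u,w\rangle \leq \|w\|_\sharp$. Applying this to both $w$ and $-w$ shows $|\langle u,w\rangle|\leq \|w\|_\sharp$ for every $w$, and taking the supremum over the unit $\sharp$-ball yields $\|u\|_\diamond\leq 1$.

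For the second claim I may assume $x\neq 0$, and since the first part already gives $\|u\|_\diamond\leq 1$ it suffices to prove $\|u\|_\diamond\geq 1$. Here I would pin down the exact value $\langle u,x\rangle = \|x\|_\sharp$ using two special choices: setting $y=0$ gives $0\geq \|x\|_\sharp - \langle u,x\rangle$, hence $\langle u,x\rangle\geq \|x\|_\sharp$, while setting $y=2x$ gives $2\|x\|_\sharp \geq \|x\|_\sharp + \langle u,x\rangle$, hence $\langle u,x\rangle\leq \|x\|_\sharp$. Because $x\neq 0$ forces $\|x\|_\sharp>0$, the normalized vector $x/\|x\|_\sharp$ lies on the unit $\sharp$-sphere and certifies $\|u\|_\diamond\geq \langle u,x\rangle/\|x\|_\sharp = 1$. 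Combined with the first part this yields $\|u\|_\diamond = 1$.

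No single step here is a genuine obstacle; the argument is the standard one identifying the subdifferential of a norm with the exposed faces of the dual ball. The only point requiring a little care is the meaning of $\langle u,y-x\rangle$ when $\mathbb{F}=\mathbb{C}$: there the subgradient inequality must be read with the real inner product $\mathrm{Re}\langle\cdot,\cdot\rangle$, so in the homogeneity step I would first replace $w$ by a unimodular phase multiple that aligns $\langle u,w\rangle$ with its modulus before running the estimate. With this standard adjustment the dual-norm supremum is still attained and the conclusions are unchanged, so I would present the proof for the real case and note that the complex case follows verbatim.
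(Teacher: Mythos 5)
Your proof is correct and takes essentially the same approach as the paper's: both feed chosen test vectors into the subgradient inequality, pin down $\langle u,x\rangle=\|x\|_\sharp$ via $y=0$ and $y=2x$, and then certify $\|u\|_\diamond\geq 1$ when $x\neq 0$ (your normalized vector $x/\|x\|_\sharp$ is just the paper's dual-norm inequality made explicit). The only cosmetic difference is in the bound $\|u\|_\diamond\leq 1$, where you use the perturbation $y=x+tw$ together with the triangle inequality, while the paper first establishes $\langle u,x\rangle=\|x\|_\sharp$ and then reads off $|\langle u,v\rangle|\leq\|v\|_\sharp$ directly; the two steps are equivalent in substance, and your remark about reading the pairing as $\mathrm{Re}\langle\cdot,\cdot\rangle$ in the complex case is a point the paper glosses over.
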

\begin{proof}
From the convexity of $\|\cdot\|_\sharp$ and the subdifferential definition, for any $u\in \partial \|x\|_\sharp$ and $v\in \mathcal{H}$ it holds
$$ \|v\|_\sharp \geq \|x\|_\sharp +\langle u, v-x\rangle.$$
Set $v=0$ and $v=2x$ to get $\langle u, x\rangle \geq \|x\|_\sharp $ and $\langle u, x\rangle \leq \|x\|_\sharp $ respectively. This implies $\langle u, x\rangle = \|x\|_\sharp $ and hence $\langle u, v\rangle \leq \|v\|_\sharp $. Similarly, by taking $-v\in \mathcal{H}$, we can get $-\langle u, v\rangle \leq \|v\|_\sharp $. Thus,  $|\langle u, v\rangle| \leq \|v\|_\sharp $. Therefore, $$\|u\|_\diamond=\sup_{\|v\|_\sharp\leq 1}|\langle u, v\rangle|\leq \sup_{\|v\|_\sharp\leq 1 }\|v\|_\sharp\leq 1.$$
When $x\neq 0$, by the Cauchy-Schwartz inequality we get that $\|x\|_\sharp=\langle u, x\rangle\leq \|x\|_\sharp \|u\|_\diamond$ and hence $\|u\|_\diamond\geq 1$. So it must have $\|u\|_\diamond= 1$.
\end{proof}

Now, we state the characterization of uniformly stable and robust reconstruction via the Lasso/Basis Pursuit type model by utilizing the $(\rho,\alpha)$-robust width property.
\begin{theorem}
For any CS space $(\mathcal{H},\mathcal{A}, \|\cdot\|_\sharp )$  with bound $L$ and any linear operator $\Phi :\mathcal{H}\rightarrow \mathbb{F}^M$, the following are equivalent up to constants:

(a) $\Phi$ satisfies the $(\rho,\alpha)$-robust width property over $B_\sharp$.

(b) For every $x^\natural \in\mathcal{H}, \kappa\in (0,1), \lambda>0$ and $\omega\in\mathbb{F}^M$ satisfying $\|\Phi^T\omega\|_\diamond\leq \kappa\lambda$, any solution $x^*$ to the unconstrained optimization model
\begin{equation}\label{Lasso}
\min \frac{1}{2}\|\Phi x-(\Phi x^\natural +\omega)\|_2^2+\lambda \|x\|_\sharp \tag{$P_{\lambda}$}
\end{equation}
satisfies $\|x^*-x^\natural \|_2 \leq C_0 \|x^\natural -a \|_\sharp +C_1 \cdot \lambda$ for every $a\in \mathcal{A}$.

In particular, (a) implies (b) with
$$C_0=\left( \frac{1-\kappa}{2\rho}-L\right)^{-1},~~~~ C_1=\frac{1+\kappa}{\alpha^2\rho}$$
provided $\rho<\frac{1-\kappa}{2L}$. Also, (b) implies (a) with
$$ \rho =2C_0, ~~~~\alpha=\frac{\kappa}{2\tau C_1},$$
where $\tau=\sup_{\|x\|_\sharp\leq 1}\|\Phi x\|_2$.
\end{theorem}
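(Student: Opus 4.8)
The plan is to prove the two implications separately. For $(a)\Rightarrow(b)$, write $h=x^{*}-x^{\natural}$ and $y=\Phi x^{\natural}+\omega$, and exploit the optimality of $x^{*}$ through the value comparison $\frac{1}{2}\|\Phi x^{*}-y\|_{2}^{2}+\lambda\|x^{*}\|_{\sharp}\le\frac{1}{2}\|\Phi x^{\natural}-y\|_{2}^{2}+\lambda\|x^{\natural}\|_{\sharp}$. Since $\Phi x^{\natural}-y=-\omega$ and $\Phi x^{*}-y=\Phi h-\omega$, expanding the square and cancelling $\frac{1}{2}\|\omega\|_{2}^{2}$ gives the master inequality
\[
\frac{1}{2}\|\Phi h\|_{2}^{2}+\lambda\bigl(\|x^{*}\|_{\sharp}-\|x^{\natural}\|_{\sharp}\bigr)\le\langle h,\Phi^{T}\omega\rangle\le\|\Phi^{T}\omega\|_{\diamond}\,\|h\|_{\sharp}\le\kappa\lambda\|h\|_{\sharp},
\]
where the last two bounds use the dual norm and the hypothesis $\|\Phi^{T}\omega\|_{\diamond}\le\kappa\lambda$. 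I would then dichotomize on whether $h$ is compressible or diffuse.

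If $\|h\|_{2}\le\rho\|h\|_{\sharp}$, I apply the CS-space property (Definition \ref{cs}) to $a\in\mathcal{A}$ and $v=h$, obtaining $h=z_{1}+z_{2}$ with $\|a+z_{1}\|_{\sharp}=\|a\|_{\sharp}+\|z_{1}\|_{\sharp}$ and $\|z_{2}\|_{\sharp}\le L\|h\|_{2}$. Chaining $\|x^{*}\|_{\sharp}\ge\|a+z_{1}\|_{\sharp}-\|x^{\natural}-a\|_{\sharp}-\|z_{2}\|_{\sharp}$, the bound $\|x^{\natural}\|_{\sharp}\le\|a\|_{\sharp}+\|x^{\natural}-a\|_{\sharp}$, and $\|z_{1}\|_{\sharp}\ge\|h\|_{\sharp}-\|z_{2}\|_{\sharp}$ yields $\|x^{*}\|_{\sharp}-\|x^{\natural}\|_{\sharp}\ge\|h\|_{\sharp}-2\|z_{2}\|_{\sharp}-2\|x^{\natural}-a\|_{\sharp}$. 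Substituting into the master inequality, discarding the nonnegative $\frac{1}{2}\|\Phi h\|_{2}^{2}$, and using $\|z_{2}\|_{\sharp}\le L\|h\|_{2}$ with $\|h\|_{\sharp}\ge\|h\|_{2}/\rho$ produces $\bigl(\tfrac{1-\kappa}{2\rho}-L\bigr)\|h\|_{2}\le\|x^{\natural}-a\|_{\sharp}$; the hypothesis $\rho<\tfrac{1-\kappa}{2L}$ makes the coefficient positive and gives $\|h\|_{2}\le C_{0}\|x^{\natural}-a\|_{\sharp}$. If instead $\|h\|_{2}>\rho\|h\|_{\sharp}$, I discard the CS-space step and keep only the reverse triangle inequality $\|x^{*}\|_{\sharp}-\|x^{\natural}\|_{\sharp}\ge-\|h\|_{\sharp}$ in the master inequality to reach $\frac{1}{2}\|\Phi h\|_{2}^{2}\le(1+\kappa)\lambda\|h\|_{\sharp}$; now robust width applies, $\|\Phi h\|_{2}\ge\alpha\|h\|_{2}$, and with $\|h\|_{\sharp}<\|h\|_{2}/\rho$ this gives $\|h\|_{2}\le\tfrac{2(1+\kappa)}{\alpha^{2}\rho}\lambda$. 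To land the displayed constant $C_{1}=\tfrac{1+\kappa}{\alpha^{2}\rho}$ exactly, I would sharpen this case by replacing the value comparison with the first-order optimality condition $\Phi^{T}\Phi h=\Phi^{T}\omega-\lambda u$, $u\in\partial\|x^{*}\|_{\sharp}$ (so $\|u\|_{\diamond}\le1$ by Lemma \ref{lem1}); pairing with $h$ removes the factor $\tfrac12$ and yields $\|\Phi h\|_{2}^{2}\le(1+\kappa)\lambda\|h\|_{\sharp}$, hence $\|h\|_2\le C_1\lambda$. Adding the two cases gives (b).

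For $(b)\Rightarrow(a)$, I verify robust width in its contrapositive form by engineering a recovery instance. Given $x$ with $\|\Phi x\|_{2}<\alpha\|x\|_{2}$ (and $x\ne0$, else the conclusion is trivial), set $x^{\natural}=x$, $a=0\in\mathcal{A}$, $\omega=-\Phi x$ and $\lambda=\|x\|_{2}/(2C_{1})$. Then $y=\Phi x^{\natural}+\omega=0$, so $F(x')=\frac{1}{2}\|\Phi x'\|_{2}^{2}+\lambda\|x'\|_{\sharp}\ge0=F(0)$ shows $x^{*}=0$ solves $(P_{\lambda})$; moreover $\|\Phi^{T}\omega\|_{\diamond}=\|\Phi^{T}\Phi x\|_{\diamond}\le\tau\|\Phi x\|_{2}<\tau\alpha\|x\|_{2}=\kappa\lambda$ by the definition of $\tau$ and the choice $\alpha=\kappa/(2\tau C_{1})$, so the hypothesis of (b) is met. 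Applying (b) to this instance gives $\|x\|_{2}=\|x^{*}-x^{\natural}\|_{2}\le C_{0}\|x\|_{\sharp}+C_{1}\lambda=C_{0}\|x\|_{\sharp}+\tfrac12\|x\|_{2}$, whence $\|x\|_{2}\le2C_{0}\|x\|_{\sharp}=\rho\|x\|_{\sharp}$, which is exactly the robust width conclusion with $\rho=2C_{0}$.

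The hardest part is the dichotomy in $(a)\Rightarrow(b)$: the key realization is that the two regimes demand different tools. The CS-space decomposition is what controls the compressible part and forces the approximation term $\|x^{\natural}-a\|_{\sharp}$, while the robust width inequality only bites in the diffuse regime and is precisely what converts $\|\Phi h\|_{2}$ back into $\|h\|_{2}$ to produce the $\lambda$ term; pinning the threshold $\rho<(1-\kappa)/(2L)$ that keeps the compressible-case coefficient positive, and using the first-order condition to remove the spurious factor $\tfrac12$ in the diffuse case, is the main bookkeeping obstacle. The converse is comparatively easy once one hits on the instance $y=0$, $\omega=-\Phi x$, which makes $x^{*}=0$ a free solution and turns the recovery guarantee directly into the required singular-value-type bound.
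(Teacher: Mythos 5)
Your proposal is correct and follows essentially the same route as the paper's proof: the same optimality-based master inequality (the paper's Step 1), the same CS-space decomposition chain (Step 2), the same use of the first-order optimality condition together with Lemma \ref{lem1} to remove the spurious factor $\tfrac{1}{2}$ and land $C_1=\frac{1+\kappa}{\alpha^2\rho}$ exactly (Step 3), and the same converse instance $\omega=-\Phi x^\natural$ with $\lambda=\|x^\natural\|_2/(2C_1)$, which coincides with the paper's choice $\lambda=\kappa^{-1}\tau\alpha\|x^\natural\|_2$ once $\alpha=\kappa/(2\tau C_1)$ is substituted. The only cosmetic difference is the bookkeeping of the dichotomy: you split on $\|h\|_2\leq\rho\|h\|_\sharp$ versus $\|h\|_2>\rho\|h\|_\sharp$, whereas the paper assumes $\|z\|_2>C_0\|x^\natural-a\|_\sharp$ and deduces $\|z\|_2>\rho\|z\|_\sharp$ from its Step 2 inequality --- an equivalent arrangement yielding identical constants.
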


\begin{proof}
Let $z=x^*-x^\natural$. We divide the proof of $(a)\Rightarrow (b)$ into four steps. They are partially inspired by \cite{candes2011tight} and \cite{jameson2014robust}.

\textbf{Step 1:} Prove the first relationship:
 \begin{equation}\label{step1eq}
\|x^*\|_\sharp-\kappa \|z\|_\sharp \leq \|x^\natural\|_\sharp.
\end{equation}
Since $x^*$ is a minimizer to \eqref{Lasso}, we have
$$\frac{1}{2}\|\Phi x^*-(\Phi x^\natural +w)\|_2^2+\lambda \|x^*\|_\sharp\leq \frac{1}{2}\|\Phi x^\natural -(\Phi x^\natural +w)\|_2^2+\lambda \|x^\natural \|_\sharp.$$
Hence,
$$\frac{1}{2}\|(\Phi x^*-\Phi x^\natural)-w\|_2^2+\lambda \|x^*\|_\sharp\leq \frac{1}{2}\|w\|_2^2+\lambda \|x^\natural \|_\sharp.$$
Rearrange terms to give
$$\lambda \|x^*\|_\sharp\leq -\frac{1}{2}\|\Phi (x^* -x^\natural)\|_2^2+ \langle\Phi(x^*-x^\natural),  w\rangle +\lambda \|x^\natural \|_\sharp \leq \langle x^*-x^\natural, \Phi^T w\rangle +\lambda \|x^\natural \|_\sharp.$$ By the Cauchy-Schwartz inequality and the condition $\|\Phi^Tw\|_\diamond\leq \kappa\lambda$,  we obtain that
$$\langle x^*-x^\natural, \Phi^T w\rangle\leq \|x^*-x^\natural\|_\sharp \|\Phi^Tw\|_\diamond\leq \kappa\lambda\|x^*-x^\natural\|_\sharp.$$
Thus, $\lambda \|x^*\|_\sharp\leq \kappa\lambda\|x^*-x^\natural\|_\sharp+ \lambda \|x^\natural \|_\sharp$ from which the first relationship follows.

\textbf{Step 2:} Prove the second relationship:
 \begin{equation}\label{step2eq}
\|z\|_\sharp\leq \frac{2}{1-\kappa}\|x^\natural -a\|_\sharp +\frac{2L}{1-\kappa}\|z\|_2.
\end{equation}
Pick $a\in \mathcal{A}$, and decompose $z=x^*-x^\natural=z_1+z_2$ according to the property (ii) in Definition \ref{cs} so that $\|a+z_1\|_\sharp =\|a\|_\sharp+\|z_1\|_\sharp$ and $\|z_2\|_\sharp\leq L\|z\|_2.$ In light of \eqref{step1eq}, we derive that
\begin{align*}
 \|a\|_\sharp +\|x^\natural -a\|_\sharp &\geq \|x^\natural\|_\sharp  \\
  &\geq  \|x^*\|_\sharp-\kappa \|z\|_\sharp \\
  & =\|x^\natural+(x^*-x^\natural)\|_\sharp-\kappa \|x^*-x^\natural\|_\sharp\\
  & =\|a + (x^\natural-a)+z_1+z_2\|_\sharp-\kappa \|z_1+z_2\|_\sharp\\
  &\geq \|a+z_1\|_\sharp-\| x^\natural-a \|_\sharp-(1+\kappa)\|z_1\|_\sharp-\kappa \|z_2\|_\sharp\\
  &=\|a\|_\sharp + \|z_1\|_\sharp-\| x^\natural-a \|_\sharp-(1+\kappa)\|z_2\|_\sharp-\kappa \|z_1\|_\sharp\\
  &=\|a\|_\sharp + (1-\kappa)\|z_1\|_\sharp-\| x^\natural-a \|_\sharp-(1+\kappa)\|z_2\|_\sharp.
\end{align*}
Rearrange terms to give $$\|z_1\|_\sharp\leq \frac{2}{1-\kappa}\|x^\natural -a\|_\sharp+\frac{1+\kappa}{1-\kappa}\|z_2\|_\sharp$$ which implies
$$\|z\|_\sharp\leq \|z_1\|_\sharp+ \|z_2\|_\sharp\leq \frac{2}{1-\kappa}\|x^\natural -a\|_\sharp+\frac{2}{1-\kappa}\|z_2\|_\sharp.$$
Thus, the second relationship follows by invoking $\|z_2\|_\sharp\leq L\|z\|_2.$

\textbf{Step 3:} Derive the upper bound:
 \begin{equation}
\|\Phi z\|_2^2\leq (1+\kappa)\lambda \|z\|_\sharp.
\end{equation}
The optimality condition of \eqref{Lasso} reads
$$\Phi^T(\Phi x^\natural +w-\Phi x^*)\in \lambda\cdot\partial \|x^*\|_\sharp.$$
By using Lemma \ref{lem1}, we get $\|\Phi^T(\Phi x^\natural +w-\Phi x^*)\|_\diamond \leq \lambda$. Thus,
\begin{subequations}
\begin{align*}
\|\Phi^T\Phi z\|_\diamond  &= \|\Phi^T\Phi (x^*-x^\natural)\|_\diamond  \\
  &\leq  \|\Phi^T(\Phi x^*-\Phi x^\natural-w)\|_\diamond  +\|\Phi^Tw\|_\diamond  \\
  & \leq \lambda+\kappa \lambda=(1+\kappa)\lambda.
\end{align*}
\end{subequations}
Therefore,
$$\|\Phi z\|_2^2  = \langle z, \Phi^T\Phi z\rangle
  \leq  \|z\|_\sharp\cdot \|\Phi^T\Phi z\|_\diamond\leq  (1+\kappa)\lambda \|z\|_\sharp,$$
  where the first inequality follows from the Cauchy-Schwartz inequality.

\textbf{Step 4:} Finish the proof.  Assume
$ \|z\|_2> C_0\cdot \|x^\natural -a \|_\sharp $, since otherwise we are done. In light of \eqref{step2eq}, we obtain
$$\|z\|_\sharp<\left[\frac{2}{C_0(1-\kappa)}+\frac{2L}{1-\kappa}\right]\|z\|_2=\rho^{-1}\|z\|_2,$$
i.e., $\|z\|_2>\rho\|z\|_\sharp$. By the $(\rho,\alpha)$-robust width property of $\Phi$, we have
$\|\Phi z\|_2\geq \alpha \|z\|_2$. Utilizing the upper bound of $\|\Phi z\|_2^2$ in Step 3, we derive that
$$\alpha^2\|z\|_2^2\leq \|\Phi z\|_2^2\leq (1+\kappa)\lambda \|z\|_\sharp<\frac{(1+\kappa)\lambda}{\rho}\|z\|_2.$$
Thus, $$\|z\|_2\leq \frac{(1+\kappa)\lambda}{\alpha^2\rho}=C_1\cdot \lambda\leq C_0 \|x^\natural -a \|_\sharp +C_1\cdot \lambda.$$
This completes the proof of $(a)\Rightarrow (b)$.

The proof of $(b)\Rightarrow (a)$. Pick $x^\natural$ such that $\|\Phi x^\natural\|_2<\alpha\|x^\natural\|_2$. By the expression of $\tau=\sup_{\|x\|_\sharp\leq 1}\|\Phi x\|_2$ and using the Cauchy-Schwartz inequality, we derive that
\begin{subequations}
\begin{align*}
\tau\cdot \alpha\|x^\natural\|_2 & >\tau\cdot \|\Phi x^\natural\|_2 =  \sup_{\|x\|_\sharp\leq 1}\|\Phi x\|_2 \cdot  \|\Phi x^\natural\|_2 \\
  & \geq \sup_{\|x\|_\sharp\leq 1}\langle \Phi x, \Phi x^\natural\rangle = \sup_{\|x\|_\sharp\leq 1}\langle x, \Phi ^T \Phi x^\natural\rangle\\
  &= \|\Phi ^T \Phi x^\natural\|_\diamond.
\end{align*}
\end{subequations}
Let $\lambda=\kappa^{-1}\tau\alpha \|x^\natural\|_2$ and $\omega=-\Phi x^\natural$. Then, we have
$$\kappa\lambda = \tau\cdot \alpha \|x^\natural\|_2\geq \|\Phi ^T \Phi x^\natural\|_\diamond=\|\Phi^T w\|_\diamond,$$
which implies that the choosing of $\lambda$ and $\omega$ satisfies the constrained condition $\|\Phi^T w\|_\diamond\leq \kappa\lambda$.  Thereby, we can take $\omega=-\Phi x^\natural$ and hence conclude that $x^*=0$ is a minimizer of \eqref{Lasso}. Thus,
$$\|x^\natural\|_2=\|x^*-x^\natural\|_2\leq C_0\|x^\natural\|_\sharp+C_1\lambda = C_0\|x^\natural\|_\sharp+C_1\kappa^{-1}\tau\alpha \|x^\natural\|_2.$$
Take $\alpha =\frac{\kappa}{2\tau C_1}$ and $\rho=2C_0$ and rearrange terms to give $$\|x^\natural\|_2\leq \frac{C_0}{1-C_1\kappa^{-1}\tau\alpha}\|x^\natural\|_\sharp=\rho \|x^\natural\|_\sharp.$$
So the $(\rho,\alpha)$-robust width property of $\Phi$ holds.
\end{proof}

\begin{remark}
In the paper \cite{jameson2014robust}, to obtain a corresponding result for \eqref{BP}, it suffices for $\|\cdot\|_\sharp$ to satisfy:

(i)$\|x\|_\sharp \geq \|0\|_\sharp$ for every $x\in \mathcal{H}$, and

(ii) $\|x+y\|_\sharp \leq \|x\|_\sharp +\|y\|_\sharp$ for every $x, y\in\mathcal{H}$.

In contrast, Theorem 1 not only requires (i) and (ii) above, but also utilizes the convexity of $\|\cdot\|_\sharp$ and its dual norm. The additional requirement of convexity excludes the cases of nonconvex $\|\cdot\|_\sharp$. For example, the case of
$$\|x\|_\sharp=\|x\|_p^p:=\sum_{i=1}^N|x_i|^p, ~~0<p<1$$
is not covered by Theorem 1.

\end{remark}

With very similar arguments, we can show the following theorem which characterizes the uniformly stable and robust reconstruction via the Dantzig type model by utilizing the $(\rho,\alpha)$-robust width property.
\begin{theorem}
For any CS space $(\mathcal{H},\mathcal{A}, \|\cdot\|_\sharp )$  with bound $L$ and any linear operator $\Phi :\mathcal{H}\rightarrow \mathbb{F}^M$, the following are equivalent up to constants:

(a) $\Phi$ satisfies the $(\rho,\alpha)$-robust width property over $B_\sharp$.

(b) For every $x^\natural \in\mathcal{H}, \lambda>0$ and $\omega \in\mathbb{F}^M$ satisfying $\|\Phi^T\omega\|_\diamond\leq \lambda$, any solution $x^*$ to the following optimization model
\begin{equation}\label{Dant}
\min \|x\|_\sharp, ~~~\textrm{subject to}~~  \|\Phi^T(\Phi x-(\Phi x^\natural +\omega))\|_\diamond\leq \lambda \tag{$R_{\lambda}$}
\end{equation}
satisfies $\|x^*-x^\natural \|_2 \leq C_0 \|x^\natural -a \|_\sharp +C_1 \cdot \lambda$ for every $a\in \mathcal{A}$.

In particular, (a) implies (b) with
$$C_0=\left( \frac{1}{2\rho}-L\right)^{-1},~~~~ C_1=\frac{2}{\alpha^2\rho}$$
provided $\rho<\frac{1}{2L}$. Also, (b) implies (a) with
$$ \rho =2C_0, ~~~~\alpha=\frac{\kappa}{2\tau C_1},$$
where $\tau=\sup_{\|x\|_\sharp\leq 1}\|\Phi x\|_2$.
\end{theorem}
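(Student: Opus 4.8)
The plan is to replay the four-step architecture of the proof of Theorem 1, replacing the role of the quadratic fidelity term by the explicit dual-norm constraint of \eqref{Dant}. Throughout put $z = x^* - x^\natural$. The starting observation is that $x^\natural$ is itself feasible for \eqref{Dant}, since $\|\Phi^T(\Phi x^\natural - (\Phi x^\natural + \omega))\|_\diamond = \|\Phi^T\omega\|_\diamond \leq \lambda$; because $x^*$ minimizes $\|\cdot\|_\sharp$ over the feasible set, this gives the analogue of Step 1 directly, namely $\|x^*\|_\sharp \leq \|x^\natural\|_\sharp$. This is exactly inequality \eqref{step1eq} with $\kappa = 0$, which explains why the factor $1-\kappa$ in the Lasso constant $C_0$ collapses to $1$ here.

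From here, Step 2 is a verbatim transcription of \eqref{step2eq} at $\kappa = 0$: using the decomposition $z = z_1 + z_2$ afforded by Definition \ref{cs} together with $\|x^*\|_\sharp \leq \|x^\natural\|_\sharp$, the same telescoping of triangle inequalities yields $\|z\|_\sharp \leq 2\|x^\natural - a\|_\sharp + 2L\|z\|_2$. The only step that genuinely differs is Step 3, where I would \emph{not} invoke the optimality condition and Lemma \ref{lem1}, but instead bound $\|\Phi^T\Phi z\|_\diamond$ straight from the constraint. Splitting $\Phi^T\Phi z = \Phi^T(\Phi x^* - (\Phi x^\natural + \omega)) + \Phi^T\omega$ and using the triangle inequality for $\|\cdot\|_\diamond$, feasibility of $x^*$ contributes $\leq \lambda$ and the hypothesis $\|\Phi^T\omega\|_\diamond \leq \lambda$ contributes another $\leq \lambda$, so $\|\Phi^T\Phi z\|_\diamond \leq 2\lambda$. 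Cauchy--Schwartz, $\|\Phi z\|_2^2 = \langle z, \Phi^T\Phi z\rangle \leq \|z\|_\sharp\,\|\Phi^T\Phi z\|_\diamond$, then gives $\|\Phi z\|_2^2 \leq 2\lambda\|z\|_\sharp$. Conceptually, the single parameter $\kappa$ of Theorem 1 has split into two effective values: $\kappa = 0$ governs the objective comparison of Steps 1--2 (since feasibility carries no residual penalty), while the constraint forces the effective value $\kappa = 1$ in Step 3, which is the origin of the factor $2$ in $C_1$.

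With these estimates Step 4 is identical to Theorem 1. Assuming $\|z\|_2 > C_0\|x^\natural - a\|_\sharp$ (otherwise nothing is to prove), the $\kappa = 0$ form of \eqref{step2eq} forces $\|z\|_\sharp < \rho^{-1}\|z\|_2$ with $\rho^{-1} = 2/C_0 + 2L$, whence the $(\rho,\alpha)$-robust width property gives $\|\Phi z\|_2 \geq \alpha\|z\|_2$. Combining with Step 3, $\alpha^2\|z\|_2^2 \leq 2\lambda\|z\|_\sharp < (2\lambda/\rho)\|z\|_2$, so $\|z\|_2 \leq 2\lambda/(\alpha^2\rho) = C_1\lambda$. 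Solving $\rho^{-1} = 2/C_0 + 2L$ recovers $C_0 = (\tfrac{1}{2\rho} - L)^{-1}$, with $\rho < \tfrac{1}{2L}$ being precisely the condition that keeps $C_0$ positive.

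For $(b)\Rightarrow(a)$ I would follow the Lasso argument: pick $x^\natural$ with $\|\Phi x^\natural\|_2 < \alpha\|x^\natural\|_2$, and via the definition of $\tau$ and Cauchy--Schwartz derive $\|\Phi^T\Phi x^\natural\|_\diamond < \tau\alpha\|x^\natural\|_2$. Setting $\omega = -\Phi x^\natural$ and $\lambda = \tau\alpha\|x^\natural\|_2$ verifies the hypothesis $\|\Phi^T\omega\|_\diamond \leq \lambda$, and with this $\omega$ the constraint of \eqref{Dant} reads $\|\Phi^T\Phi x\|_\diamond \leq \lambda$, so $x^* = 0$ is feasible and trivially minimizes $\|\cdot\|_\sharp$. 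Applying (b) with $a = 0 \in \mathcal{A}$ gives $\|x^\natural\|_2 \leq C_0\|x^\natural\|_\sharp + C_1\tau\alpha\|x^\natural\|_2$, and choosing $\alpha = \tfrac{1}{2\tau C_1}$ so that $C_1\tau\alpha = \tfrac12$, then rearranging, yields $\|x^\natural\|_2 \leq 2C_0\|x^\natural\|_\sharp = \rho\|x^\natural\|_\sharp$ with $\rho = 2C_0$. I expect the only delicate point to be the $\kappa$-bookkeeping: because the feasibility threshold here is $\|\Phi^T\omega\|_\diamond \leq \lambda$ rather than the Lasso's $\leq \kappa\lambda$, the effective parameter in this direction is $\kappa = 1$, so the stated $\alpha = \tfrac{\kappa}{2\tau C_1}$ is to be read with $\kappa = 1$; everything else is a mechanical transcription of Theorem 1 with $\kappa$ set to $0$ in the objective estimates and to $1$ in the residual estimate.
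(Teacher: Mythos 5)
Your proof is correct and takes essentially the same route as the paper's: feasibility of $x^\natural$ yields $\|x^*\|_\sharp\leq\|x^\natural\|_\sharp$, the constraint plus the hypothesis on $\omega$ give $\|\Phi^T\Phi z\|_\diamond\leq 2\lambda$ (the paper likewise bypasses the optimality condition and Lemma \ref{lem1}, remarking that convexity of $\|\cdot\|_\sharp$ is not needed here), and Steps 2 and 4 of Theorem 1 are reused verbatim, as is the $(b)\Rightarrow(a)$ argument with $\omega=-\Phi x^\natural$ and $\lambda=\tau\alpha\|x^\natural\|_2$. Your reading of the stray $\kappa$ in $\alpha=\frac{\kappa}{2\tau C_1}$ as $\kappa=1$ correctly resolves what is evidently a leftover from the statement of Theorem 1.
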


\begin{proof} The proof below follows from the pattern used for that of Theorem 1. Let $z=x^*-x^\natural$.

\textbf{Step 1: }Since $x^*$ is a minimizer of \eqref{Dant}, it holds that $\|x^*\|_\sharp\leq \|x^\natural\|_\sharp$. Now, repeat the argument for Step 2 in the proof of Theorem 1 to give
 \begin{equation*}
\|z\|_\sharp\leq 2\|x^\natural -a\|_\sharp +2L\|z\|_2.
\end{equation*}

\textbf{Step 2:} Prove the upper bound:
 \begin{equation*}
\|\Phi z\|_2^2\leq 2\lambda \|z\|_\sharp.
\end{equation*}
This follows from that
$$\|\Phi^T\Phi z\|_\diamond \leq \|\Phi^T(\Phi x^*-(\Phi x^\natural +w))\|_\diamond + \|\Phi^Tw\|_\diamond\leq 2 \lambda$$
and
$$\|\Phi z\|_2^2  = \langle z, \Phi^T\Phi z\rangle\leq  \|z\|_\sharp\cdot \|\Phi^T\Phi z\|_\diamond.$$
The remained proof of $(a)\Rightarrow (b)$ follows by repeating the argument for Step 4 in the proof of Theorem 1.

The proof of $(b)\Rightarrow (a)$. Pick $x^\natural$ such that $\|\Phi x^\natural\|_2<\alpha\|x^\natural\|_2$. Let $\lambda=\tau\alpha \|x^\natural\|_2$ and $\omega=-\Phi x^\natural$. We have proved in the proof of Theorem 1 that such choosing of $\lambda$ and $\omega$ satisfies the constrained condition of $\|\Phi^Tw\|_\diamond\leq \lambda$ and hence $x^*=0$ is the unique minimizer of \eqref{Dant}. The remained proof of $(b)\Rightarrow (a)$ follows by repeating the corresponding part in the proof of Theorem 1.
\end{proof}
Note that the convexity of $\|\cdot\|_\sharp$ is not involved in the proof of Theorem 2.

\section*{Acknowledgements}
The author would like to thank Dr. Jameson Cahill for his communication and anonymous reviewers for their valuable comments, with which great improvements have been made in this manuscript. The work is supported by the National Science
Foundation of China (No.11501569 and No.61571008).


\end{document}